\definecolor{green}{rgb}{0.1,0.1,0.1}
\definecolor{blueprl}{RGB}{46,48,146}
\newtheorem{theorem}{Theorem}
\newenvironment{proof}[1][Proof]{\noindent\textbf{#1.} }{\ \rule{0.5em}{0.5em}}
\tikzset{%
  remember picture with id/.style={%
    remember picture,
    overlay,
    save picture id=#1,
  },
  save picture id/.code={%
    \edef\pgf@temp{#1}%
    \immediate\write\pgfutil@auxout{%
      \noexpand\savepointas{\pgf@temp}{\pgfpictureid}}%
  },
  if picture id/.code args={#1#2#3}{%
    \@ifundefined{save@pt@#1}{%
      \pgfkeysalso{#3}%
    }{
      \pgfkeysalso{#2}%
    }
  }
}
\def\savepointas#1#2{%
  \expandafter\gdef\csname save@pt@#1\endcsname{#2}%
}
\def\tmk@labeldef#1,#2\@nil{%
  \def\tmk@label{#1}%
  \def\tmk@def{#2}%
}
\def\ANU{Centre for Quantum Computation and Communication Technology, Department of Quantum Science, Australian National University, Canberra, ACT 2601, Australia.}
    \def\Astar{Institute of Materials Research and Engineering, Agency for Science Technology and Research (A*STAR), 2 Fusionopolis Way, 08-03 Innovis 138634, Singapore}
\def\Astartwo{Institute of High Performance Computing, Agency for Science, Technology and Research (A*STAR), 1 Fusionopolis Way, \#16-16 Connexis, Singapore 138632, Republic of Singapore}
\def\strathclyde{SUPA Department of Physics, The University of Strathclyde, Glasgow, G4 0NG, UK}
\def\Harvard{Department of Physics, Harvard University, Cambridge, Massachusetts 02138, USA}
\begin{document}
\title{Attainability of quantum state discrimination bounds with collective measurements on finite copies}
\author{Lorc{\'a}n O. Conlon}
\email{lorcanconlon@gmail.com}
\affiliation{\Astar}
\author{Jin Ming Koh}
\affiliation{\Astartwo}
\affiliation{\Harvard}
\author{Biveen Shajilal}
\affiliation{\Astar}
\author{Jasminder Sidhu}
\affiliation{\strathclyde}
\author{Ping Koy Lam}
\affiliation{\Astar}
\affiliation{\ANU}
\author{Syed M. Assad}
\email{cqtsma@gmail.com}
\affiliation{\Astar}
\affiliation{\ANU}

\begin{abstract}
One of the fundamental tenets of quantum mechanics is that non-orthogonal states cannot be distinguished perfectly. When distinguishing multiple copies of a mixed quantum state, a collective measurement, which generates entanglement between the different copies of the unknown state, can achieve a lower error probability than non-entangling measurements. The error probability that can be attained using a collective measurement on a finite number of copies of the unknown state is given by the Helstrom bound. In the limit where we can perform a collective measurement on asymptotically many copies of the quantum state, the quantum Chernoff bound gives the attainable error probability. It is natural to ask then what strategies can be employed to reach these two bounds -- is entanglement across all available modes always necessary or can other, experimentally more simple, measurements saturate these bounds. In this work we address this question. We find analytic expressions for the Helstrom bound for arbitrarily many copies of the unknown state in simple qubit examples. Using these analytic expressions, we investigate whether the quantum Chernoff bound can be saturated by repeatedly implementing the $M$-copy Helstrom measurement. We also investigate the necessary conditions to saturate the $M$-copy Helstrom bound. It is known that a collective measurement on all $M$-copies of the unknown state is always sufficient to saturate the $M$-copy Helstrom bound. However, general conditions for when such a measurement is necessary to saturate the Helstrom bound remain unknown. We investigate specific measurement strategies which involve entangling operations on fewer than all $M$-copies of the unknown state. For many regimes we find that a collective measurement on all $M$-copies of the unknown state is necessary to saturate the $M$-copy Helstrom bound. 
\end{abstract}
\maketitle

\section{Introduction}
Quantum state discrimination can be described as the task of discriminating between $M$ copies of two unknown quantum states $\rho_+^{\otimes M}$ or $\rho_-^{\otimes M}$. Due to the no-cloning theorem, it is impossible to perform this task perfectly unless the states to be discriminated are orthogonal to one another~\cite{wootters1982single,dieks1982communication}. Therefore, it has become an important task to find measurements which minimise the average probability of making an error while remaining experimentally feasible. When there is only a single copy of the unknown quantum state, i.e. $M=1$, the optimal measurement is simple to implement, not requiring any entangling operations. Along this line, there have been several experimental studies focused on optimally distinguishing single copies of pure states~\cite{cook2007optical,wittmann2008demonstration,bartuuvskova2008programmable,waldherr2012distinguishing,becerra2013implementation,izumi2020experimental,izumi2021adaptive,sidhu2021quantum,gomez2022experimental}. For $M>1$, there are two main classes of measurements: (1) adaptive measurements which rely only on local operations and classical communication (LOCC) between the $M$ different modes of the unknown quantum states and (2) collective measurements which generate entanglement between all $M$ copies of the unknown quantum state. Clearly the latter is a more powerful class of measurement which includes the former as a special case. However, the cost of this is that collective measurements are more experimentally complex to implement. As a result of this, adaptive measurements, or LOCC measurements, have been well studied and shown to outperform non-adaptive measurements~\cite{walgate2000local,becerra2013experimental,higgins2009mixed}. In contrast, the advantage of collective measurements over LOCC measurements in this context has only been demonstrated experimentally recently~\cite{conlon2023discriminating,zhou2023experimental}. Note that the advantage of collective measurements has been experimentally demonstrated in several other areas of quantum information~\cite{roccia2017entangling,hou2018deterministic,conlon2023approaching,parniak2018beating,wu2019experimentally,yuan2020direct,wu2020minimizing}. Throughout this work we shall refer to collective measurements as those acting on multiple copies of the same state\footnote{However, it is important to note that many other tasks in quantum information also benefit from measurements in an entangled basis, which do not necessarily act on multiple copies of the same quantum state, such as quantum metrology~\cite{marciniak2022optimal}, communications~\cite{delaney2022demonstration,crossman2023quantum} and orienteering~\cite{gisin1999spin,jeffrey2006optical,tang2020experimental}.}.

Theoretically, the role of collective measurements in state discrimination is reasonably well understood. Helstrom found a simple expression for the minimal error probability that can be attained when performing an entangling measurement on all $M$-copies of the unknown state. This has become known as the Helstrom bound~\cite{helstrom1969quantum}. Helstrom's work also provides the entangling measurement which can saturate his bound. In certain situations however, collective measurements are not needed to saturate the Helstrom bound. Trivially, when $M=1$, there is only one copy of the unknown state available and so collective measurements are not possible. When discriminating two pure states it is known that collective measurements offer no advantage over separable measurements~\cite{brody1996minimum,ban1997accessible,acin2005multiple}. (Note however, that in slightly different settings collective measurements may outperform separable measurements for discriminating certain families of pure states~\cite{peres1991optimal,massar1995optimal,chitambar2013revisiting}.) Additionally, when the two states to be discriminated are simultaneously diagonalisable, the Helstrom bound can be saturated through the maximum likelihood measurement which requires only local operations, see Ref.~\cite{audenaert2014upper} and appendix~\ref{apen:simdiag}. However, in general for $M>1$ when discriminating between two mixed states $\rho_+^{\otimes M}$ or $\rho_-^{\otimes M}$, collective measurements may be required to saturate the Helstrom bound. In spite of the importance of this question, concrete results regarding when collective measurements are required to saturate the Helstrom bound remain largely unknown (Necessary and sufficient conditions that the optimal measurement must satisfy are known, see e.g. Refs.~\cite{helstrom1969quantum,ban1997optimum} and appendix~\ref{apen:POVMopt}). Much of our knowledge in this area comes from numerical results, where it has been shown that for discriminating qubit states collective measurements outperform separable measurements~\cite{calsamiglia2010local,higgins2011multiple,conlon2023discriminating}.

When we take the limit of performing collective measurements on a large number of copies of the unknown state, the error probability is given by the quantum Chernoff bound~\cite{audenaert2007discriminating,Nussbaum2009chernoff}. This is the quantum analogue of the Chernoff bound in classical statistics~\cite{chernoff1952measure}. Interestingly, the conditions for what measurement strategy is optimal changes in the asymptotic limit. For example, it is known that a fixed single-copy measurement is capable of asymptotically approaching this minimum error probability when discriminating two states, one of which is pure~\cite{acin2005multiple,calsamiglia2008quantum}. Additionally, we note that Hayashi has shown that LOCC measurements do not outperform fixed local measurements in terms of the asymptotic error rate~\cite{hayashi2009discrimination}. Tables~\ref{T1} and \ref{T2} summarise these known results regarding when different measurement strategies and bounds are equal. We compare both the asymptotic (Table~\ref{T1}) and  finite-copy (Table~\ref{T2}) regime. These tables also indicate where we see our work as fitting into the existing literature.

In this paper we investigate the attainability of both the quantum Chernoff bound and the Helstrom bound given finite resources. For the attainability of the quantum Chernoff bound we investigate the rate at which the error probability specified by the quantum Chernoff bound is approached with collective measurements on a finite number of copies of the unknown state. We do this by considering repeated implementations of the Helstrom measurement on groups of quantum states. This investigates the scenario where entangling measurements can be repeatedly implemented on small subsets of the available quantum states.

 For investigating the attainability of the Helstrom bound we ask whether collective measurements on all $M$ copies of a quantum state are necessary to saturate the Helstrom bound. We address this question using parameterised quantum circuits optimised to find measurements which minimise the error probablity. Although there exists much numerical evidence that there is a gap between the Helstrom bound and the LOCC bound Ref.~\cite{calsamiglia2010local}, to the best of our knowledge whether the Helstrom bound can be saturated by intermediate measurements remains unknown and relatively unexplored. By intermediate measurements we mean measurements capable of generating entanglement between more than 1, but not all $M$ copies of the unknown quantum state. We investigate how close these intermediate measurements can be to the Helstrom bound. Our investigations also reveal the optimal strategies given limited entangling resources. For example, given $M$ copies of the unknown state, and a device capable of faithfully performing an entangling measurement on $90\%$ of these copies, is it better to perform this entangling operation on the first $90\%$ or the last $90\%$? Our results indicate that the later option is superior. We then extend these results from LOCC measurements to positive partial transpose (PPT) measurements which are a stronger class of measurements than LOCC, but still weaker than entangling measurements on all $M$ copies. Interestingly in this setting we find different results depending on whether we consider discriminating between two or more quantum states. 

We note that there have been several previous studies which have examined state discrimination in the finite copy setting. Refs.~\cite{pereira2023analytical,audenaert2012quantum,rouze2017finite,li2014second} provides bounds on the error probability attainable in the $M$-copy asymmetric state discrimination setting. As we provide exact expressions for the $M$-copy Helstrom bound, our work can be thought of as building on these previous more general works, by providing exact solutions for several simple examples. We also note that there are several works bounding other closely related quantities, including the channel capacity and quantum channel discrimination, in the finite-copy regime~\cite{tomamichel2013hierarchy,wang2012one,tomamichel2016quantum,bergh2024parallelization}.

%

The rest of the paper is set out as follows. In section~\ref{secprelim}, we introduce the preliminary material: the Helstrom bound, the quantum Chernoff bound and the questions we wish to address. In section~\ref{secres1}, we present our main results on the attainability of the quantum Chernoff bound. We present analytic expressions for the Helstrom bound when discriminating multiple copies of qubit states. We use this to investigate the attainability of the Chernoff bound when repeatedly implementing the $M$-copy Helstrom measurement. In section~\ref{secres2} we investigate the requirements to saturate the $M$-copy Helstrom bound, building on the results in the previous section. Finally, in section~\ref{secconc}, we present our final remarks and conclude.


\begin{table*}[t]
\caption{\label{T1}\textbf{Conditions for equality between different measurement strategies and bounds in quantum state discrimination in the asymptotic limit.} Each box gives the known results about the conditions for equality between the corresponding measurement strategies or bounds in that row and column.  $\kappa$, $P^{\text{H}}_\text{e}(M)$, $P_\text{e,LOCC}$ and $P_\text{e,LO}$ denote the quantum Chernoff bound, the $M$-copy Helstrom bound, the LOCC error probability and the error probability when performing the same local operation respectively. In this table we only consider the large $M$ limit.}
  \centering
\begin{tabular}[t]{ | p{2cm} |  p{4.2cm}| p{4.2cm}| p{4.2cm} |} 
  \hline
 &  $P^{\text{H}}_\text{e}(M)$& $P_\text{e,LOCC}$& $P_\text{e,LO}$ \\ 
 \hline
 $\kappa$ & Equal by definition~\cite{audenaert2007discriminating,Nussbaum2009chernoff}. 
 &Equal for pure states~\cite{brody1996minimum,ban1997accessible,acin2005multiple}.&Equal up to a constant when one of the states is pure~\cite{acin2005multiple,calsamiglia2008quantum}.
 \\ \hline
$P^{\text{H}}_\text{e}(M)$   & \hspace{1.6cm}\scalebox{4}{\raisebox{-2.7ex}{\checkmark}} &Equal for pure states~\cite{brody1996minimum,ban1997accessible,acin2005multiple}.\newline\noindent Gap between the two conjectured to persist in the asymptotic limit for mixed qubit states in Ref.~\cite{calsamiglia2010local}.&By virtue of the previous column, conjectured to be unequal in the asymptotic limit for mixed qubit states~\cite{calsamiglia2010local}. \newline\noindent By virtue of the above row, equal up to a constant when one state is pure~\cite{acin2005multiple,calsamiglia2008quantum}. \\ 
   \hline
  $P_\text{e,LOCC}$&\hspace{1.6cm}\scalebox{4}{\raisebox{-3ex}{\checkmark}} &\hspace{1.6cm}\scalebox{4}{\raisebox{-3ex}{\checkmark}}  & By virtue of the above row, equal up to a constant when one state is pure~\cite{acin2005multiple,calsamiglia2008quantum}.\newline\noindent Not equal for mixed states~\cite{higgins2011multiple}. \newline\noindent Fixed measurements can achieve the same scaling as adaptive~\cite{hayashi2009discrimination}.\\ \hline 
\end{tabular}
\end{table*}


\begin{table*}[t]
\caption{\label{T2}\textbf{Conditions for equality between different measurement strategies in quantum state discrimination in the finite copy limit.} Each box gives the known results about the conditions for equality between the corresponding measurement strategies or bounds in that row and column.  $\kappa$, $P^{\text{H}}_\text{e}(M)$, $P_\text{e,LOCC}$ and $P_\text{e,LO}$ denote the $M$-copy quantum Chernoff bound, the $M$-copy Helstrom bound, the LOCC error probability and the error probability when performing the same local operation respectively. In this table we consider any finite positive integer value of $M$. Note that we include the quantum Chernoff bound although it is, by definition, an asymptotic limit. For this row, we refer to repeated implementations of the corresponding optimal $M$-copy measurement, and whether that can asymptotically saturate the quantum Chernoff bound.}
  \centering
\begin{tabular}[t]{ | p{2cm} |  p{4.2cm} | p{4.2cm} | p{4.2cm} |} 
  \hline
 &  $P^{\text{H}}_\text{e}(M)$& $P_\text{e,LOCC}$& $P_\text{e,LO}$ \\ 
 \hline
 $\kappa$ & Investigated in this work 
 & Indirectly investigated in this work&Indirectly investigated in this work \\ 
   \hline
$P^{\text{H}}_\text{e}(M)$   &\hspace{1.6cm}\scalebox{4}{\raisebox{-2.7ex}{\checkmark}} &\noindent For qubit states can be equal or unequal depending on the number of copies and the prior probability~\cite{conlon2023discriminating,calsamiglia2010local}. \newline\noindent Equal for pure states~\cite{brody1996minimum,ban1997accessible,acin2005multiple}. \newline\noindent This is further investigated in this work.&In general not equal~\cite{higgins2011multiple,calsamiglia2010local}. \newline\noindent Equal for any two simultaneously diagonalisable states through the maximum likelihood measurement, see Ref.~\cite{audenaert2014upper} and appendix~\ref{apen:simdiag}. 
  \\ 
   \hline
  $P_\text{e,LOCC}$& \hspace{1.8cm}\scalebox{2.5}{\raisebox{-1.1ex}{\checkmark}}&\hspace{1.8cm}\scalebox{2.5}{\raisebox{-1ex}{\checkmark}} & In general not equal, see for example Refs~\cite{higgins2011multiple,calsamiglia2010local}\\ \hline 
\end{tabular}
\end{table*}

\section{Preliminaries}
\label{secprelim}
 Given $M$ copies of the unknown state, $\rho_{\pm}^{\otimes M}$, our task is to decide which state we are given with the minimum probability of error. Denoting the prior probability for the unknown state to be the state $\rho_{+}^{\otimes M}$ as $q$, the error probability is given by
\begin{equation}
\label{eq:errorprobabilitybasic}
P_\text{e}=qP_{-|+}+(1-q)P_{+|-}\;,
\end{equation}
where $P_{-|+}$ is the probability of guessing the state $\rho_{-}^{\otimes M}$ when given the state $\rho_{+}^{\otimes M}$ and $P_{+|-}$ is similarly defined. 

\subsection{Helstrom bound}
In general, the optimal measurement to distinguish $\rho_{\pm}^{\otimes M}$, will be a collective measurement which involves entangling operations between all $M$ modes. The minimum error probability, in this case, is given by the Helstrom bound~\cite{helstrom1969quantum} 
\begin{equation}
\label{eq:helstromdefinition}
P^{\text{H}}_\text{e}(M)=\frac{1}{2}\bigg(1-\left\lVert q\rho_+^{\otimes M}-(1-q)\rho_-^{\otimes M}\right\rVert\bigg)\;,
\end{equation}
where $P^{\text{H}}_\text{e}(M)$ denotes the error probability which can be attained by collective measurements and $\lVert A\rVert$ is the sum of the absolute values of the eigenvalues of $A$. It is known that the Helstrom bound can be saturated by measuring the following observable, which may or may not correspond to a collective measurement
\begin{equation}
\label{eq:helstromobservable}
\Gamma=q\rho_{+}^{\otimes M}-(1-q)\rho_-^{\otimes M}\;.
\end{equation}
The state $\rho_{\pm}$ is guessed according to the sign of the measurement output. Note that, in some scenarios, $\Gamma$ will be either positive or negative meaning that the best strategy is always simply to guess the corresponding state. In practice, we implement the operator $\Gamma$, through a positive operator valued measure (POVM). A POVM is a set of positive operators $\{\Pi_k\geq0\}$, which sum to the identity $\sum_k\Pi_k=\mathbb{I}$. Using the optimal measurement, given in Eq.~\eqref{eq:helstromobservable}, and the definition of error probability in Eq.~\eqref{eq:errorprobabilitybasic}, we can verify that the Helstrom bound becomes $(1-\lVert\Gamma\rVert)/2$, as expected from Eq.~\eqref{eq:helstromdefinition}~\cite{bergou2010discrimination}. In general the POVM required to implement $\Gamma$ will correspond to a collective measurement. 


\subsection{Quantum Chernoff bound}
The quantum Chernoff bound (hereafter abbreviated to Chernoff bound) gives the error probability when we allow for collective measurements on asymptotically many copies of the unknown state. Specifically~\cite{audenaert2007discriminating,Nussbaum2009chernoff}
%
%
%
\begin{equation}
    \lim_{M\to\infty}\frac{-\text{log}(P_{\text{e,min},M})}{M}=-\text{log}(\kappa)\;,
\end{equation}
where
\begin{equation}
\label{eq:chernoffdefinition2}
\kappa=\min_{0\leq s\leq1}\text{Tr}[\rho_+^s\rho_-^{1-s}]\;,
\end{equation}
and $P_{\text{e,min},M}$ is the minimum error probability given $M$ copies of the probe state, i.e. $P_{\text{e,min},M}=P^{\text{H}}_\text{e}(M)$. Given $M\gg1$ copies of a quantum state, we can expect the error probability to scale as $P_{\text{e,min},M}\propto\text{e}^{-\epsilon_MM}$. We call this $\epsilon_M$ term the error exponent.

\section{Attainability of the quantum Chernoff bound}
\label{secres1}

\subsection{Example 1 - Simple qubit example}
\label{examplesimple}
We consider a very simple example first. We wish to discriminate between the two qubit states described by
\begin{equation}
\rho_{\pm}=\frac{1}{2}\bigg(\mathbb{I}_2\pm(1-v)(\sigma_3)\bigg)\;,
\end{equation}
where $\mathbb{I}_2$ is the $2\times2$ identity matrix, $\sigma_i$ denotes the $i$th Pauli matrix, and $0\leq v \leq 1$ describes the extent to which the state is mixed. Note that $v=0$ corresponds to a pure state and $v=1$ represents the maximally mixed state.

\subsubsection{Helstrom bound}
Let us consider the eigenvalues of $\rho_+^{\otimes M}-\rho_-^{\otimes M}$ as this will allow us to calculate the Helstrom bound. To do so, it is easiest to note that
\begin{equation}
\begin{split}
\rho_+&=\frac{1}{2}\big((2-v)\ket{0}\bra{0}+v\ket{1}\bra{1}\big)\\
\rho_-&=\frac{1}{2}\big(v\ket{0}\bra{0}+(2-v)\ket{1}\bra{1}\big)\;.
\end{split}
\end{equation}
We can then see that
\begin{equation}
\begin{split}
\rho_+^{\otimes M}&=\frac{1}{2^M}\sum_i(2-v)^{f_0(i)}v^{f_1(i)}\ket{\psi_i}\bra{\psi_i}\\
\rho_-^{\otimes M}&=\frac{1}{2^M}\sum_iv^{f_0(i)}(2-v)^{f_1(i)}\ket{\psi_i}\bra{\psi_i}\;,
\end{split}
\end{equation}
where the sum is over all $2^M$ possible states and $f_0(i)$ represents the number of times the state $\ket{0}$ is involved in the tensor product generating the state $\ket{\psi_i}$. Now we can see that the term $\norm{q\rho_+^{\otimes M}-(1-q)\rho_-^{\otimes M}}$, evaluated at $q=1/2$, is given by
\begin{equation}
\label{eqhelequal}
\frac{\norm{\rho_+^{\otimes M}-\rho_-^{\otimes M}}}{2}=\frac{1}{2^{M}}\sum_{i=0}^{\lfloor M/2\rfloor}\binom{M}{i}\big((2-v)^{M-i}v^{i}-v^{M-i}(2-v)^{i}\big)\;,
\end{equation}
which allows for easy evaluation of the Helstrom bound and the quantity $\epsilon_M$. 


\begin{figure*}[t]
\includegraphics[width=0.95\textwidth]{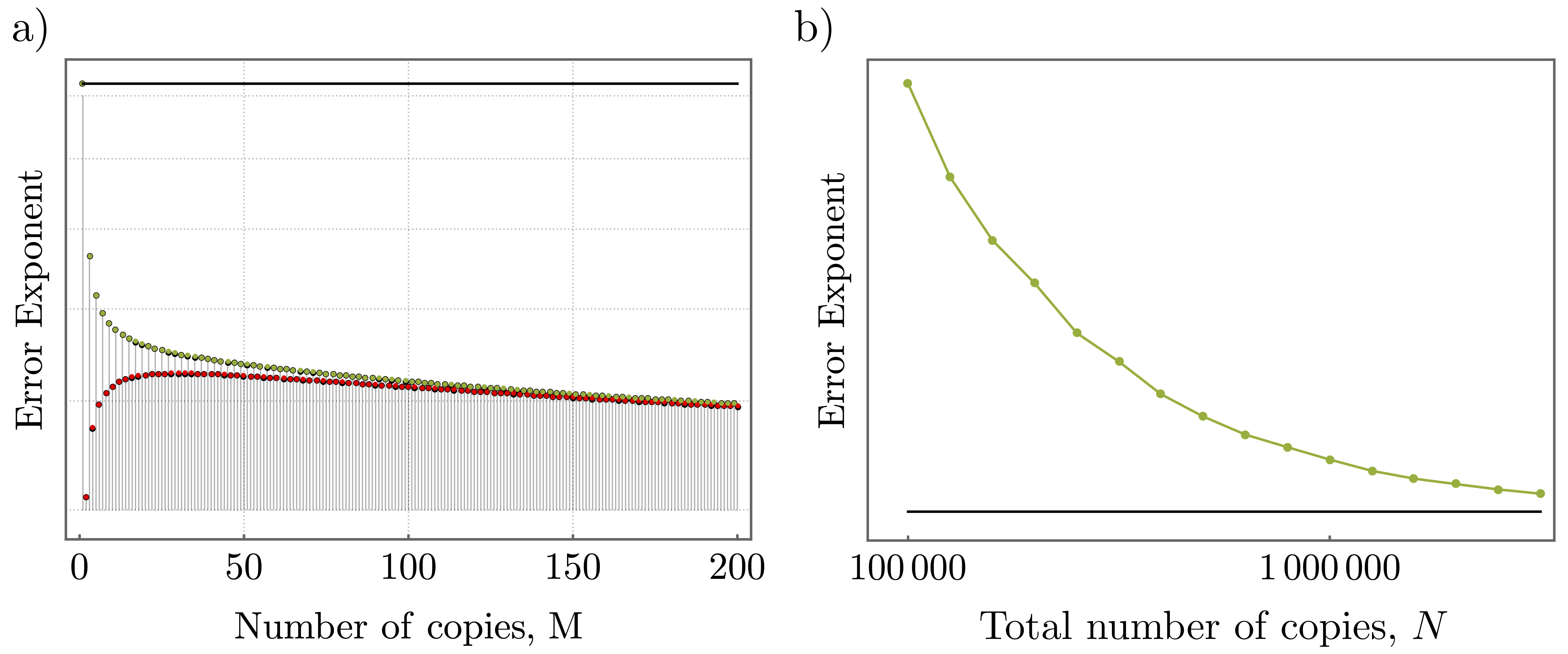}
\caption{\textbf{Error exponent attained by repeatedly implementing the Helstrom measurement.} a) The black line represents the error exponent given by the Chernoff bound. The green (odd $M$) and red (even $M$) points represent the true error probability obtained when repeatedly implementing the $M$-copy Helstrom bound for $N=1\times10^6$. The grey lines and black circles correspond to the lower bound on the error exponent given by Eq.~\eqref{errorExpRepeatLB}. b) The error exponents obtained when repeatedly implementing the single-copy Helstrom measurement are shown in green. The Chernoff bound is shown in black. Both plots use $v=0.8$.}
\label{fig:HelRep}
\end{figure*}

\subsubsection{Helstrom bound with unequal priors}
For this example, we can also calculate the Helstrom bound with unequal priors
\begin{equation}
\begin{split}
&\norm{q\rho_+^{\otimes M}-(1-q)\rho_-^{\otimes M}}=\\& \frac{1}{2^{M}}\sum_{i=0}^{ M}\binom{M}{i}\abs{\big(qv^{i}(2-v)^{M-i}-(1-q)(2-v)^{i}v^{M-i}\big)}\;.
\end{split}
\end{equation}
Making an analytic comparison of the Helstrom bound with equal and unequal priors is difficult due to the presence of the absolute value function in the above expression. The term inside the absolute function is a decreasing function of $i$, so we can write
\begin{equation}
\label{eq:eg1unequalq}
\begin{split}
&\norm{q\rho_+^{\otimes M}-(1-q)\rho_-^{\otimes M}}=\\& \frac{1}{2^{M}}\sum_{i=0}^{T}\binom{M}{i}\big(qv^{i}(2-v)^{M-i}-(1-q)(2-v)^{i}v^{M-i}\big)\\ +&\frac{1}{2^{M}}\sum_{i=T+1}^{M}\binom{M}{i}\big((1-q)(2-v)^{i}v^{M-i}-qv^{i}(2-v)^{M-i}\big)\;,
\end{split}
\end{equation}
where $T$ represents the point where the function inside the absolute sign changes from positive to negative. This function is positive when
\begin{equation}
i\leq\frac{\text{log}(q/(1-q))+M\text{log}((2-v)/v)}{2\text{log}((2-v)/v)}\;.
\end{equation}
We see that when $q=1/2$, this reduces to $i\leq M/2$ as expected. 
\subsubsection{Chernoff bound}

For this example the Chernoff bound is optimised for $s=1/2$ and can be calculated as
\begin{equation}
\label{eqChernoffsimple}
\kappa=[(2-v)v]^{1/2}\;,
\end{equation}
which allows the quantity $\epsilon_\infty$ (i.e. the asymptotic error exponent) to be calculated.

\subsubsection{Exponent attained by repeatedly performing the Helstrom measurement}
\label{sssecHelrepeat}
In this section, we consider the scenario where we have $N$ total copies of the quantum state and we perform the $M$-copy Helstrom measurement $N/M$ times, assuming $N\gg M$. A decision on which state is present can then be made by majority vote. Physically this corresponds to the situation where we receive either the state $\rho_+^{\otimes N}$ or $\rho_-^{\otimes N}$, however we are only capable of performing an entangling measurement on $M\ll N$ copies. It should be noted that repeatedly performing the Helstrom measurement is not necessarily the optimal strategy when allowing for $M$-copy entangling measurements on $N$ total copies of the state. As such, we do not necessarily expect the error exponent from this strategy to tend to the Chernoff exponent in the asymptotic limit.

The details of how the error exponent is calculated for this measurement strategy are presented in appendix~\ref{apen:HelRep}. In brief, the outcome of each Helstrom measurement is treated as a binomial random variable. We then make a final decision depending on which outcome, $\rho_+^{\otimes M}$ or $\rho_-^{\otimes M}$, occurs more frequently. This allows us to place bounds on the error exponent using the classical Chernoff bound. This is shown in Fig.~\ref{fig:HelRep}. Interestingly we see that the single-copy Helstrom measurement appears to be sufficient to saturate the Chernoff bound in the asymptotic limit. In Fig.~\ref{fig:HelRep} b), we show the convergence of this strategy to the Chernoff bound. This is consistent with other known results, see for example Ref.~\cite{acin2005multiple}, where repeated single-copy measurements and a  unanimity vote strategy was shown to saturate the Chernoff bound asymptotically. It is an interesting question, however not one that we address here, as to the optimal $M$-copy measurement for minimising the asymptotic error exponent. It may be that, for example, an optimised two-copy measurement can approach the Chernoff bound faster.

\begin{figure*}[t]
\includegraphics[width=0.95\textwidth]{fig_helstrom_repeat_ex2.png}
\caption{\textbf{Error exponent attained by repeatedly implementing the Helstrom measurement for the example in section~\ref{subsec-eg2}.} a) The black line represents the error exponent given by the Chernoff bound. The green (odd $M$) and red (even $M$) points represent the true error probability obtained when repeatedly implementing the $M$-copy Helstrom bound for $N=70,000$. The grey lines and black circles correspond to the lower bound on the error exponent given by Eq.~\eqref{errorExpRepeatLB}. b) The error exponents obtained when repeatedly implementing the single-copy Helstrom measurement are shown in green. The Chernoff bound is shown in black. Both plots use $v=0.8$ and $\alpha=\pi/4$.}
\label{fig:HelRepex2}
\end{figure*}
\subsection{Example 2 - More complex qubit example}
\label{subsec-eg2}
We now consider a second simple, but slightly more complex, example. We wish to discriminate between the two qubit states described by
\begin{equation}
\label{eq:complexqubit}
\rho_{\pm}=\frac{1}{2}\bigg(\mathbb{I}_2+(1-v)\big(\sigma_3\text{cos}(\alpha)\pm\sigma_1\text{sin}(\alpha)\big)\bigg)\;,
\end{equation}
where $\mathbb{I}_2$ is the $2\times2$ identity matrix, $\sigma_i$ denotes the $i$th Pauli matrix, and $v$ and $\alpha$ describe the extent to which the state is mixed and the angle between the two states in Hilbert space respectively. As before $v=0$ corresponds to a pure state and $v=1$ represents the maximally mixed state.

\subsubsection{Chernoff bound}
For this more complex example, we can also verify analytically that the Chernoff bound is optimised at $s=1/2$, giving
\begin{equation}
\label{eqChernoffeg2}
\kappa=\bigg(\sqrt{v(2-v)}-\text{cos}(\alpha)^2(\sqrt{v(2-v)}-1)\bigg)\;.
\end{equation}

\subsubsection{Pure state Helstrom bound}
For pure states, $\rho_+$ and $\rho_-$ only have one eigenvector and eigenvalue. Writing $\rho_\pm(v=0)=\ket{\psi_\pm}\bra{\psi_\pm}$, these are given by
\begin{equation}
\ket{\psi_\pm}=\begin{pmatrix}
\text{cos}(\alpha/2)\\
\pm\text{sin}(\alpha/2)
\end{pmatrix}\;.
\end{equation}
For pure states the Helstrom bound can be simplified as
\begin{equation}
\begin{split}
P_e^\text{H}(M)&=\frac{1}{2}\bigg(1-\sqrt{1-\abs{(\bra{\psi_+}^{\otimes M})(\ket{\psi_-}^{\otimes M})}^2}\bigg)\\
&=\frac{1}{2}\bigg(1-\sqrt{1-\abs{\bra{\psi_+}\ket{\psi_-}}^{2M}}\bigg)
\end{split}\;,
\end{equation}
and for our problem
\begin{equation}
\label{eqHelstromeg2}
(\bra{\psi_+}\ket{\psi_-})^2=\text{cos}(\alpha)^2\;.
\end{equation}

\subsubsection{Error exponent obtained by repeatedly performing the Helstrom measurement}
For mixed states, we are able to compute the Helstrom bound numerically. This enables us to use the same approach as in section~\ref{sssecHelrepeat}, to evaluate how the Chernoff bound is approached as we repeatedly implement the Helstrom measurement. As is shown in Fig.~\ref{fig:HelRepex2} a), in this case the lower bound from appendix~\ref{apen:HelRep} is not as close to the true error exponent as in Fig.~\ref{fig:HelRep}. However, as is shown in in Fig.~\ref{fig:HelRepex2} b), once again we see that repeatedly implementing the single-copy Helstrom bound is sufficient to asymptotically approach the Chernoff bound.

\section{Requirements to saturate the $M$-copy Helstrom bound}
\label{secres2}
We now wish to examine the requirements for saturating the $M$-copy Helstrom bound for any finite value of $M$. In Table~\ref{T2} we have summarised known results on when LOCC measurements can saturate the $M$-copy Helstrom bound. However, there exist other measurements which are more powerful than LOCC measurements but less powerful than a collective measurement on all $M$ copies. For the example considered in section~\ref{subsec-eg2}, we have previously shown that, with equal priors ($q=1/2$), the two-copy Helstrom bound can be saturated with LOCC measurements~\cite{conlon2023discriminating}, i.e. 2-copy collective measurements are not required. From a practical viewpoint, it is interesting to check whether this trend holds in general, i.e. are $M-1$-copy collective measurements sufficient to saturate the $M$-copy Helstrom bound? Or are there scenarios where, for example, different collective measurements on $M/2$ copies of the state can saturate the $M$-copy Helstrom bound?


Throughout this section, we will consider a class of measurements on $M$ copies of the quantum state, where the measurement is broken into 2 stages. In the first stage we perform a collective measurement on $M_1$ copies of the quantum state and in the second stage we perform a collective measurement, which depends on the results of the first measurement, on the remaining $M_2=M-M_1$ copies of the quantum state. We denote the error probability for this strategy as $P_{M_1,M_2}^{M}$. By introducing this family of measurements we are effectively constructing a bipartite quantum state discrimination problem, where the first party has $M_1$ copies of the quantum state and the second party has $M_2$ copies. We allow one-way LOCC measurements for the two parties. In this setting it is known that for certain examples LOCC measurements are insufficient to saturate the Helstrom bound - a phenomena known as non-locality without entanglement~\cite{bennett1999quantum,bennett1999unextendible}.

\begin{figure*}[t]
\includegraphics[width=\textwidth]{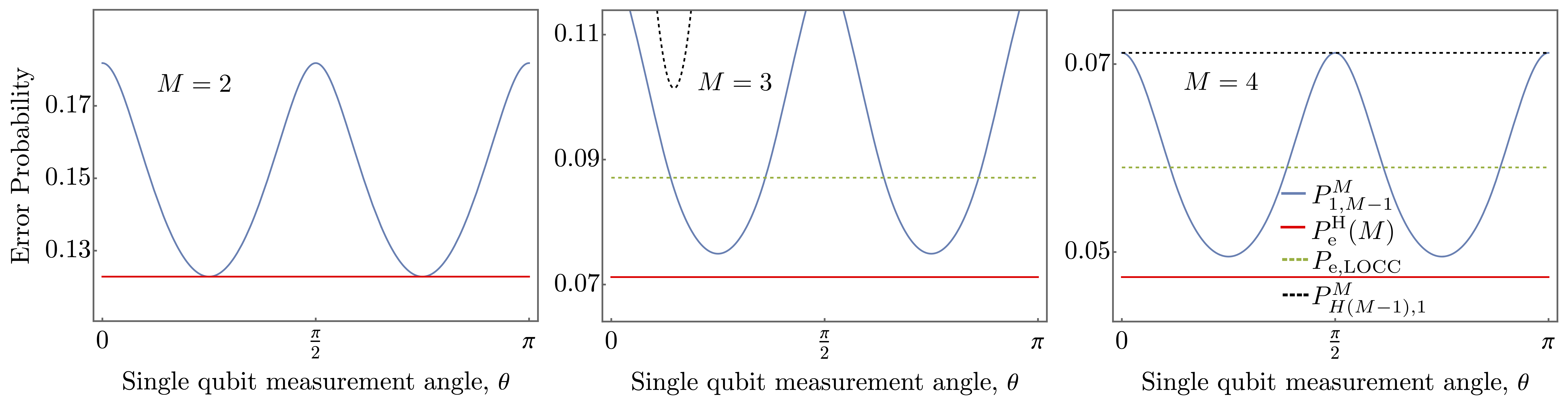}
\caption{\textbf{Attainability of the $M$-copy Helstrom bound.}  From left to right we compare the $M$-copy Helstrom bound to several measurements using only at most $M-1$-copy collective measurements, for $M=2$, $M=3$ and $M=4$ respectively. All plots correspond to $v=0.1$ and $\alpha=\pi/4$. The Helstrom bound and LOCC error probability are not a function of the single qubit measurement angle as this measurement angle refers to the two-step strategies. For the $P^M_{1,M-1}$ strategies, the \mbox{$x$-axis} (single qubit measurement angle) refers to the measurement angle in the first stage, and for the $P^M_{H(M-1),1}$ strategies, the \mbox{$x$-axis} refers the the measurement angle for the second stage. Note that in the latter case, the optimal measurement angle in the second stage depends on the measurement result in the first stage. Here for clarity, we only plot the total error probability as a function of one of these measurement angles.}
\label{fig:fig_helstrom}
\end{figure*}

It turns out that it is quite simple to get analytic results for one particular type of this two stage measurement, the $P_{1,M-1}^{M}$ measurement. We can perform a single qubit measurement, followed by a (possibly adaptive) collective measurement on the remaining copies. In general, if we first perform a single qubit measurement, we need only consider the measurements defined by the projectors $\Pi_0=\ket{\psi}\bra{\psi}$ and $\Pi_1=\ket{\psi^\perp}\bra{\psi^\perp}$ where
\begin{equation}
\label{eq:meass}
\ket{\psi}=\text{cos}(\phi)\ket{0}+\text{sin}(\phi)\ket{1}\;,
\end{equation} 
and $\bra{\psi}\ket{\psi^\perp}=0$~\cite{higgins2011multiple}. Denote the outcome of this measurement as $D_i$, so that $D_i$ can correspond to either $\Pi_0$ or $\Pi_1$ clicking. Given the measurement outcome $D_i$, the posterior probability after performing this single qubit measurement is given by
\begin{equation}
\label{eq:JM1}
q_{i}=\frac{\text{Pr}[D_i|\rho_{+},\phi_i]}{2\text{Pr}[D_i|\phi_i]}\;,
\end{equation}
where the factor of 2 in the denominator is a result of the initial prior probability being equal to $1/2$, $\phi_i$ represents the measurement angle in Eq.~\eqref{eq:meass}, and $\text{Pr}[D_i|C]$ is the probability of measurement outcome $D_i$ given the conditions $C$. Depending on which detector clicks, we therefore get a different posterior probability. This posterior probability then becomes our prior probability for the collective measurement on the remaining $M-1$ copies. Hence, the error probability in this scheme is given by
\begin{equation}
P_{1,M-1}^{M}=\text{Tr}[\rho\Pi_1]P_\text{e}^{\text{H}}(M-1,q_1)+\text{Tr}[\rho\Pi_0]P_\text{e}^{\text{H}}(M-1,q_0)\;,
\end{equation}
where we have made the dependence of the Helstrom bound on the prior probability explicit. This approach will allow us to compare different measurement strategies to the $M$-copy Helstrom bound.

\subsection{An example where the $M$-copy Helstrom bound can be saturated without collective measurements on all $M$ copies}
Let us first return to the example considered in section~\ref{examplesimple}. As the two states in this example are simultaneously diagonalisable, the Helstrom bound can be saturated through the maximum likelihood measurement, see Ref.~\cite{audenaert2014upper} and appendix~\ref{apen:simdiag}.

\subsection{An example where the $M$-copy Helstrom bound requires collective measurements on all $M$ copies to be saturated}
Let us now return to the example considered in section~\ref{subsec-eg2}. For the measurement described in Eq.~\eqref{eq:meass}, and the unknown states given by Eq.~\eqref{eq:complexqubit} with $\rho=(\rho_++\rho_-)/2$, we can calculate
\begin{equation}
\begin{split}
   \text{Tr}[\rho\Pi_0]&=\frac{1}{2}\big(1+(1-v)\text{cos}(\alpha)\text{cos}(2\phi)\big)\\
   \text{Tr}[\rho\Pi_1]&=\frac{1}{2}\big(1-(1-v)\text{cos}(\alpha)\text{cos}(2\phi)\big)\;.
   \end{split}
\end{equation}
and
\begin{equation}
\begin{split}
   \text{Tr}[\rho_+\Pi_0]&=\frac{1}{2}\big(1+(1-v)\text{cos}(\alpha-2\phi)\big)\\
   \text{Tr}[\rho_+\Pi_1]&=\frac{1}{2}\big(1-(1-v)\text{cos}(\alpha-2\phi)\big)\;.
   \end{split}
\end{equation}
This allows us to compare $P_{1,M-1}^{M}$ and $ P_{\text{e}}^{\text{H}}(M)$. We can also compare with the strategy from LOCC measurements only, which we denote $ P_{\text{e,LOCC}}$. For $M=2$, the $P_{1,M-1}^{M}$ strategy corresponds to LOCC measurements.

 Interestingly we see that for $M=2$ and equal prior probabilities of the two states ($q=1/2$), LOCC measurements are sufficient to saturate the two-copy Helstrom bound. In appendix~\ref{apen:2copyLOCCmeas}, we prove this for all $\alpha$ and $v$. However, for $M=3$ and $M=4$, two- and three-copy collective measurements respectively, do not appear to be sufficient. This is shown in Fig.~\ref{fig:fig_helstrom}. Another strategy which we can analytically obtain results for, is to perform the $M-1$-copy Helstrom measurement on the first $M-1$ copies and then perform an optimised single qubit measurement on the final copy. Note however, that in this case the $M-1$-copy Helstrom measurement may not be the optimal measurement to minimise the overall error probability. Also note that the optimal measurement on the final copy will simply be the Helstrom measurement for the updated posterior probability. We denote the error probability for this strategy as $P_{H(M-1),1}^{M}$. For $M=3$ and $M=4$ these results are also shown in Fig.~\ref{fig:fig_helstrom}. Interestingly, even the LOCC measurements can outperform this particular measurement. Note that as there are two possible outcomes for the Helstrom measurement there are two single qubit measurement angles to be optimised in the final measurement stage. However, due to the symmetry of the problem, and for clarity of presentation, we simply plot twice the error probability for one of these outcomes as a function of the corresponding measurement angle. It should also be noted that for the $M=4$ case shown in Fig.~\ref{fig:fig_helstrom}, the optimal strategy is simple to make a guess based on the measurement of the first 3 copies. This is reflected in the fact that the $P_{H(3),1}^{4}$ strategy is not a function of the single qubit measurement angle.

Finally, it should be noted that other measurement strategies may be optimal. For example, for $M=4$, perhaps two consecutive two-copy collective measurements are sufficient. Or an optimised three-copy measurement first and then a single-copy measurement. In the following section we make an attempt at optimising such measurements.

\subsection{Variational quantum circuits for finding optimal bipartite LOCC measurement strategies}
\label{subsecPQC}

From the above, we know that the $P_{1,2}^{3}$ strategy does not saturate the 3-copy Helstrom bound. Similarly, $P_{1,3}^{4}$ does not saturate the 4-copy Helstrom bound. In both of these scenarios, LOCC measurements also do not saturate the Helstrom bound. To investigate the attainability of the $M$-copy Helstrom bound more generally, we shall now use parameterised quantum circuits to investigate the optimal $P_{M-N,N}^{M}$ measurement. In general, for this strategy, the error probability can be written as
\begin{equation}
\label{eq:JM2}
P_{M-N,N}^{M}=\sum_i\text{Tr}[\rho^{\otimes (M-N)}\Pi_i]P_\text{e}^{\text{H}}(N,q_i)\;,
\end{equation}
where $q_i$ is a function of $\Pi_i$. We can think of an \mbox{$(M-N)$} qubit quantum circuit as an $2^{M-N}$ element POVM. Hence through parameterised quantum circuits, we can numerically find the optimal $P_{M-N,N}^{M}$ strategy.

To give a simple example: for the $P_{3,1}^{4}$ strategy, we use a circuit consisting of layers of single qubit unitary operations followed by alternating CNOT gates. Each three qubit circuit corresponds to a measurement on three-copies of the unknown state. There are 8 possible measurement outcomes each of which gives rise to a different posterior probability. Based on these posterior probabilities we can then calculate the Helstrom bound in each of these 8 cases, which summed and weighted by the probability of the corresponding outcome gives us $P_{3,1}^{4}$. We repeated this increasing the number of circuit layers, until the error probability converged to the final value. Note that the circuit described above can be used to optimise the $P_{3,M-3}^{M}$ strategy for any $M\geq4$. More details are provided in the Methods section.

\begin{figure*}[t]
\includegraphics[width=0.95\textwidth]{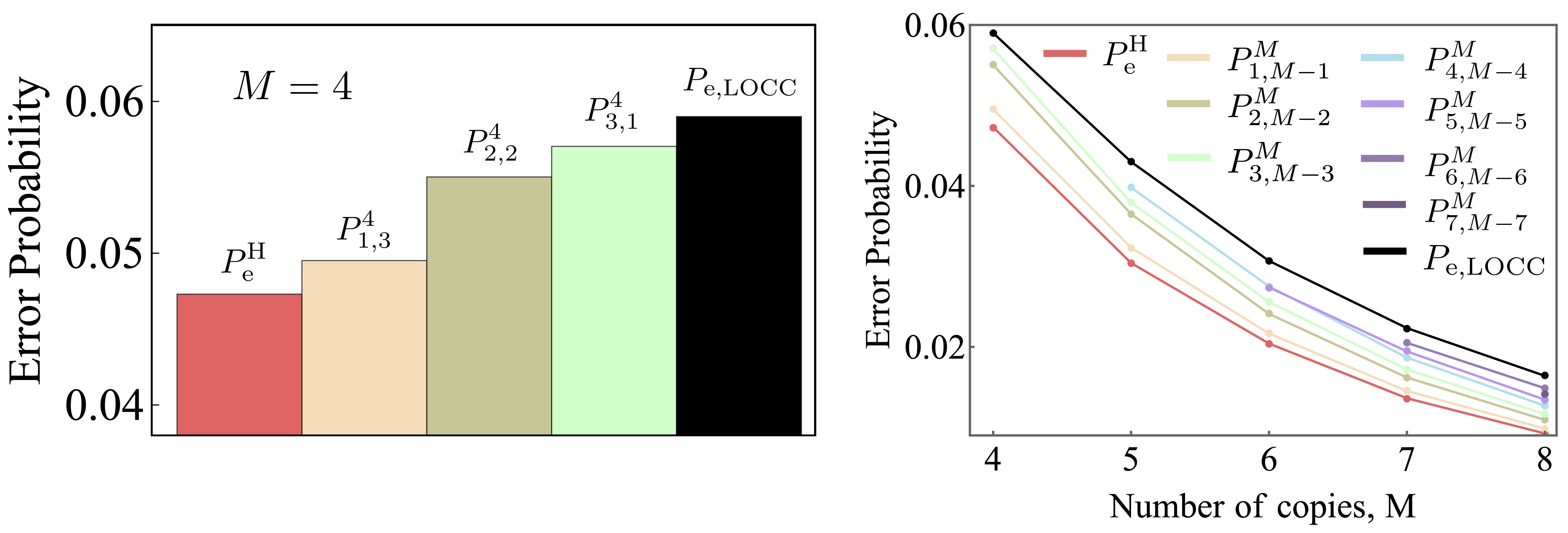}
\caption{\textbf{Attainability of the $M$-copy Helstrom bound with different LOCC measurement strategies.}  On the left we plot the probability of error for different measurement strategies optimised using parameterised quantum circuits. We see that no measurement strategy which does not use collective measurements on all 4-copies simultaneously can saturate the Helstrom bound. On the right we extend these results beyond $M=4$. $v=0.1$ and $\alpha=\pi/4$ for this figure.}
\label{fig:fig_optCircs}
\end{figure*}

 These results are shown in Fig~\ref{fig:fig_optCircs}. It is clear that in this example none of these methods can saturate the Helstrom bound. On the right of Fig~\ref{fig:fig_optCircs}, we extend these results to more than four copies of the unknown states. These results appear to suggest that for this family of states the $M$-copy Helstrom bound requires $M$-copy collective measurements to be saturated. However, this comes with the caveat that the optimisation of our parameterised quantum circuits may not give the true optimal answer. These results do indicate that if it is possible to perform collective measurements on some but not all $M$ copies of the unknown state, the best strategy is to perform the collective measurement on the final copies of the unknown state. It is also worth noting that the $P_{7,1}^{8}$ strategy appears to outperform the $P_{6,2}^{8}$ strategy, which is inconsistent with the general ordering of the rest of our results. However, this may simply be an issue with our optimisation process rather than a meaningful physical effect.


As is noted in Table~\ref{T2}, for pure states ($v=0$) the Helstrom bound can be saturated by LOCC measurements~\cite{brody1996minimum,ban1997accessible,acin2005multiple}. We verify this numerically for our problem, confirming that the LOCC strategy can saturate the 4-copy Helstrom bound up to 5 significant figures.


Finally, we wish to examine if there is a general hierarchy between two different classes of measurement: LOCC measurements and non-adaptive collective measurements on less than all $M$ copies of the unknown state. For this we compare LOCC measurements to two-copy entangling measurements, where we perform the same two-copy measurement on all pairs of the unknown state. Our initial simulations suggest that LOCC measurements are more powerful, however a more comprehensive study is required to verify this in a range of different settings.

\subsection{Numerical search for other measurements saturating the Helstrom bound}
\begin{figure}[t]
\includegraphics[width=0.47\textwidth]{fig_opt_meas_PPT.png}
\caption{\textbf{Attainability of the $M$-copy Helstrom bound with different PPT measurement strategies.}  As in Fig.~\ref{fig:fig_optCircs}, however now we consider PPT measurements between a bipartition $M_1, M_2$ as opposed to one-way LOCC measurements.}
\label{fig:fig_optCircs_PPT}
\end{figure}

\begin{figure*}[t]
\includegraphics[width=0.9\textwidth]{fig_PPT_num.png}
\caption{\textbf{Attainability of the Helstrom bound with different PPT measurement strategies.}  In both figures the dashed line corresponds to a $y=x$ line. The left figure shows 10,000 randomly sampled problems discriminating between four copies of three known qubit states. The red and blue data points correspond to $P_{2,2}^{4, \text{ PPT}}$ and $P_{1,3}^{4, \text{ PPT}}$ measurements respectively. The inset shows the difference between the Helstrom measurement and the corresponding PPT measurement. The x-axis of the inset is the same as the main figure. The right figure shows 10,000 randomly sampled problems discriminating between three quantum states in a 16 dimensional Hilbert space. }
\label{fig:fig_PPT_num}
\end{figure*}
Above we effectively considered a bipartite state discrimination problem with one-way LOCC measurements allowed between the two parties. To further examine the attainability of the Helstrom bound we now consider a broader range of possible measurements. For discriminating multipartite quantum states, several studies have investigated the role of measurements beyond simply LOCC operations, namely separable and positive partial transpose (PPT) measurements~\cite{yu2014distinguishability,bandyopadhyay2015limitations,gungor2016entanglement,bandyopadhyay2018optimal,bandyopadhyay2021entanglement}. Consider a $d$-dimensional quantum state shared between two parties, Alice and Bob. We shall denote the dimensions of the quantum state held by Alice and Bob as $d_A$ and $d_B$ respectively. Then a POVM $\{\Pi_k\}$ is separable if each $\Pi_k$ can be written as 
\begin{equation}
\Pi_k=\Pi_{k,A}\otimes\Pi_{k,B}\;,
\end{equation}
where $\Pi_{k,A}$ and $\Pi_{k,B}$ represent valid POVMs on Alice and Bob’s respective Hilbert spaces. It is clear that every LOCC measurement can be written as a separable measurement, however the converse is not necessarily true.

PPT measurements are a broader class of measurement again. The partial transpose operation acting on $A\otimes B$ takes the transpose of one of the subsystems, $A^T\otimes B$. PPT measurements are POVMs that remain positive under the action of partial transpose. The set of all PPT measurements is a closed convex cone, meaning that the optimal PPT measurement for state discrimination can be found via a semi-definite program (SDP)~\cite{cosentino2013positive,zhu2024entanglement}. Interestingly, SDPs appear in many other branches of quantum information~\cite{conlon2021efficient,kanitschar2024practical,kossmann2024semidefinite,lai2024semidefinite,tavakoli2024semidefinite}. 

Every separable measurement is a PPT measurement but the converse is not true. Therefore, we can construct a broader class of measurements where we split the $M$-copy quantum state to be discriminated into two groups of $M_1$ and $M_2$ copies, i.e. $d_A=d^{\otimes M_1}$ and $d_B=d^{\otimes M_2}$, where $d$ is the dimension of a single copy of the quantum state. We then allow PPT measurements between these two groups, and we denote the error probability for this strategy as $P_{M_1,M_2}^{M, \text{ PPT}}$. Owing to the symmetry across the partition $P_{M_1,M_2}^{M, \text{ PPT}}=P_{M_2,M_1}^{M, \text{ PPT}}$. The results for this class of measurement are shown in Fig.~\ref{fig:fig_optCircs_PPT}. As in the previous section we see that even PPT measurements among any bipartition is not sufficient to saturate the $M$-copy Helstrom bound. However, it should be noted that, as expected, the PPT measurement performs better than the corresponding one-way LOCC measurement.

\subsection{Random examples to investigate the attainability of the Helstrom bound}

Given any multi-qubit or multi-copy quantum state from a known set of quantum states to be discriminated we can efficiently compute the optimal PPT measurement given any partition. This is true even when discriminating more than two unknown quantum states. The Helstrom bound for discriminating many different states can also be written as an SDP~\cite{watrous2018theory}. This enables us to compare the PPT measurements to the optimal Helstrom measurement for many different random examples. This is done in Fig.~\ref{fig:fig_PPT_num} for two different settings. In the left figure we compare PPT measurements corresponding to the $P_{2,2}^{4, \text{ PPT}}$ and $P_{1,3}^{4, \text{ PPT}}$ strategies for discriminating between four copies of three known qubit states, i.e we distinguish between $\rho_1^{\otimes4}$, $\rho_2^{\otimes4}$ and $\rho_3^{\otimes4}$. In this example, we find that in some situations the PPT measurement equals the  Helstrom measurement to within the numerical accuracy of the SDP. In the right figure, we show the same, however now we consider discriminating between three unknown 16 dimensional quantum states. In this case, we do not find any examples where the PPT measurement matches the performance of the Helstrom measurement. However, this difference may be attributable to the fact that a 16 dimensional Hilbert space is a lot richer than the space formed by taking the tensor product of four qubit states. Note that for these random examples we do not assume that the states are equally likely, i.e. we do not assume a uniform prior distribution.

\section{Conclusion}
\label{secconc}
We have obtained analytic expressions for the Helstrom bound and Chernoff bound for several simple qubit examples. Based on this we have evaluated the attainability of the asymptotic limits in quantum state discrimination. In particular we examined whether repeatedly implementing the $M$-copy Helstrom measurement is sufficient to approach the Chernoff bound asymptotically. In this setting we found that the single-copy Helstrom measurement performed best. However, this comes with the caveat that for repeatedly implementing the same measurement other measurements will be optimal and one could expect that some other repeated entangling measurement will outperform the single-copy measurement. However, finding this optimal measurement is left as an open quesiton.

We have also investigated the conditions necessary to saturate the Helstrom bound. In particular we have investigated a family of LOCC measurements which generate entanglement on less than all $M$ copies of the unknown state. We did not find any instances where this class of measurement saturates the Helstrom bound. However, there does appear to exist a hierarchy within this family of measurements. Similar results were obtained for a family of PPT measurements. However, as is shown in Fig.~\ref{fig:fig_PPT_num}, the attainability of the Helstrom bound depends strongly on the structure of the problem. Further investigation is needed to find the conditions under which the Helstrom bound is attainable without entanglement across all available modes.


Finally, it could be useful to extend our results to other settings, particularly the discrimination of Gaussian states~\cite{lloyd2008enhanced,tan2008quantum,bradshaw2021optimal}, or the use of other figures of merit, such as asymmetric state discrimination~\cite{karsa2020quantum,pereira2023analytical} or unambiguous state discrimination~\cite{ivanovic1987differentiate,dieks1988overlap}. Given the strong connection between quantum classification and state discrimination~\cite{banchi2021generalization}, it would be interesting to evaluate the performance of the family of measurements introduced here to this and related fields, such as the learning of quantum processes~\cite{seif2024entanglement}.

\section{Methods}
The parametrized quantum circuits comprise alternating layers of $U(2)$ unitary gates on every qubit and CNOT gates between pairs of adjacent qubits, shown in Fig.~\ref{fig:fig_circ_diagram}. The CNOTs are in brickwork pattern with closed (i.e. wrap-around) boundary conditions at the ends of the qubit chain, such that on an $N$-qubit circuit, a minimum of $\sim N/2$ CNOT layers suffice to generate an entanglement light cone spanning all qubits. Each $U(2)$ gate is parametrized by $3$ angles. We minimized the error probability (see Eqs.~\eqref{eq:JM1} and \eqref{eq:JM2} of main text) over all $U(2)$ gate angles, using a standard optimization algorithm (L-BFGS-B) with basin hopping. Our calculations employed 50 hops with a maximum of 500 iterations each. We checked for convergence by increasing the number of CNOT layers in the parametrized quantum circuits (up to ~14), and number of optimization hops/iterations, to ensure that the best-found error probability is stable.

\begin{figure}[t]
\includegraphics[width=0.47\textwidth]{fig_circuit_diag.png}
\caption{\textbf{Circuit structure used to find optimal LOCC measurements in section~\ref{subsecPQC}.} An example of the circuit structure used for 6 qubits with 3 layers is shown. Blue boxes represent single qubit operations and purple boxes represent CNOT gates.}
\label{fig:fig_circ_diagram}
\end{figure}

\section*{Data availability}
Any data obtained for this project are available from the corresponding authors upon request.
\section*{Code availability}
Any code used for this project are available from the corresponding authors upon request.
\section*{Acknowledgements}
This research was funded by the Australian Research Council Centre of Excellence CE170100012. This research is supported by A*STAR C230917010, Emerging Technology and A*STAR C230917004, Quantum
Sensing. JSS acknowledges support from the UK NQTP and the EPSRC Quantum Technology Hub in Quantum Communications (Grant Ref.: EP/T001011/1).

\appendix

\section{Saturating the Helstrom bound for simultaneously diagonalisable states through the maximum likelihood measurement}
\label{apen:simdiag}
We consider discriminating between the following two states 
\begin{equation}
\rho_\text{1}=\begin{pmatrix}
\lambda_{1,1}&0&\hdots&0\\
0&\ddots&\hdots&\vdots\\
\vdots&\hdots&\ddots&0\\
0&\hdots&0&\lambda_{1,n}
\end{pmatrix}\quad\text{and}\quad\rho_\text{2}=\begin{pmatrix}
\lambda_{2,1}&0&\hdots&0\\
0&\ddots&\hdots&\vdots\\
\vdots&\hdots&\ddots&0\\
0&\hdots&0&\lambda_{2,n}
\end{pmatrix}\;.
\end{equation}
For simultaneously diagonalisable states the probability of error is given by~\cite{audenaert2014upper}
 \begin{equation}
 \label{Eq:helstrommaxlike}
1- \text{Tr}[\text{max}(q\rho_1,(1-q)\rho_2)]\;,
 \end{equation}
 where the maximum is taken entrywise in the basis that simultaneously diagonalises both matrices. This error probability can be achieved by the maximum likelihood measurement. For this we can consider two POVMs
 \begin{equation}
 \begin{split}
 \Pi_1&=\text{diag}(d_{1,1},d_{1,2},\hdots,d_{1,n})\qquad\text{and}\\
 \Pi_2&=\text{diag}(d_{2,1},d_{2,2},\hdots,d_{2,n})\;,
 \end{split}
 \end{equation}
 where $d_{1,i}=1$ if $\lambda_{1,i}\geq\lambda_{2,i}$ and 0 otherwise and $d_{2,i}=1$ if $\lambda_{2,i}>\lambda_{1,i}$. It is evident that this POVM gives a probability of error equal to that in Eq.~\eqref{Eq:helstrommaxlike}.
 
%
\section{Necessary and sufficient condition for a POVM to be saturate the Helstrom bound}
\label{apen:POVMopt}
Following Ref.~\cite{ban1997optimum}, we note that for distinguishing any two states $\rho_1$ and $\rho_2$ we can use a two outcome POVM, $\{\Pi_1,\Pi_2\}$. In order to saturate the Helstrom bound, this POVM must satisfy
\begin{equation}
\begin{split}
&\Pi_1(q\rho_1-(1-q)\rho_2)\Pi_2=0\qquad\text{and}\\
&q\rho_1\Pi_1+(1-q)\rho_2\Pi_2-p_j\rho_j\geq0\;,
\end{split}
\end{equation}
where $p_j$ is the prior probability of the unknown state being the state $\rho_j$.

\section{Calculating the error probability when repeatedly performing the $M$-copy Helstrom measurement}
\label{apen:HelRep}
We now analyse the measurement strategy discussed in section~\ref{sssecHelrepeat}, where we have $N$ copies of the quantum state and we perform the $M$-copy Helstrom measurement $N/M$ times ($N\gg M$). Eq.~\eqref{eqhelequal} gives an analytic expression for the error probability when the Helstrom measurement is implemented. From the symmetry of the problem being considered (equal prior probabilities) we can conclude that the Helstrom measurement is symmetric. Hence, the probability of each outcome of the Helstrom measurement is equal to $P^{\text{H}}_\text{e}(M)$. We then repeat this measurement $N/M$ times, which can be modelled as a binomial distribution. Let us consider the following strategy. We assign the value 1 to successful events and the value 0 to incorrectly identified density matrices. In this case the expected value from the sum of all the random variables ($X=\sum_{i=1}^{N/M}X_i$) is $\mu=(1-P^{\text{H}}_\text{e}(M))N/M>N/2M$, and we will incorrectly guess the unknown density matrix if the sum of the random variables has a value less than $N/2M$. The classical Chernoff bound for the sum of $n$ repeated binomially distributed random variables characterised by probability $p$ states that~\cite{arratia1989tutorial}
\begin{equation}
P(X\leq k)\leq \text{e}^{-nD(\frac{k}{n}||p)}\;,
\end{equation}
where
\begin{equation}
D(a||p)=a\text{log}\bigg(\frac{a}{p}\bigg)+(1-a)\text{log}\bigg(\frac{1-a}{1-p}\bigg)
\end{equation}
is the relative entropy. For our example $n=N/M$, $k=N/2M$ and $p=1-P^{\text{H}}_\text{e}(M)$. Substituting in these values we arrive at 
\begin{equation}
P^{\text{H,r}}_\text{e}(N)=P\bigg(X<\frac{N}{2M}\bigg)\leq \text{e}^{-\frac{N}{2M}\big(\text{log}\big(\frac{1}{2P^{\text{H}}_\text{e}(M)}\big)+\text{log}\big(\frac{1}{2(1-P^{\text{H}}_\text{e}(M))}\big)\big)}\;,
\end{equation}
where we use the superscript H,r to denote that this error corresponds to the repeated Helstrom measurement. Therefore we arrive at a lower bound on the error exponent using this strategy, 
\begin{equation}
\label{errorExpRepeatLB}
\epsilon_{N}^\text{H,r}\geq\frac{1}{2M}\bigg(\text{log}\bigg(\frac{1}{2P^{\text{H}}_\text{e}(M)}\bigg)+\text{log}\bigg(\frac{1}{2(1-P^{\text{H}}_\text{e}(M))}\bigg)\bigg)\;.
\end{equation}

\section{LOCC measurement saturating the Helstrom bound for two copies of the unknown state and equal prior probabilities}
\label{apen:2copyLOCCmeas}
The two copy Helstrom bound is given by
\begin{equation}
\label{eq:HelstromapenF}
P^{\text{H}}_\text{e}(2)=\frac{1}{2}\bigg(1-\left\lVert q\rho_+^{\otimes 2}-(1-q)\rho_-^{\otimes 2}\right\rVert\bigg)\;,
\end{equation}
with $\rho_\pm$ defined in Eq.~\eqref{eq:complexqubit}. It is easily verified that this becomes
\begin{equation}
P^{\text{H}}_\text{e}(2)=\frac{1}{4}\bigg(2-\sqrt{2}(1-v) \sqrt{3-(2-v)v+(1-v)^2\text{cos}(2\alpha)}\text{sin}(\alpha)  \bigg)\;.
\end{equation}

We now present a LOCC measurement strategy which saturates this bound. On the first copy of the unknown state, we implement the measurement described in Eq.~\eqref{eq:meass} with $\phi=\pi/4$. If $\Pi_0$ clicks in the first measurement the posterior probability (equivalently the prior probability for the second round) becomes
\begin{equation}
q_{1|0}=\frac{1}{2}\bigg(1+(1-v)\text{sin}(\alpha)\bigg)\;.
\end{equation}
This is calculated using Eq.~\eqref{eq:JM1}. Similarly if $\Pi_1$ clicks in the first measurement the posterior probability is given by
\begin{equation}
q_{1|1}=\frac{1}{2}\bigg(1-(1-v)\text{sin}(\alpha)\bigg)\;.
\end{equation}
Depending on which POVM element clicked we implement the corresponding optimal measurement in Eq.~\eqref{eq:helstromobservable}. As the probability of both POVMs clicking are equal to $1/2$, the LOCC error is then given by
\begin{equation}
P_{\text{e,LOCC}}=\frac{1}{2}(P^{\text{H}}_\text{e}(1,q_{1|0})+P^{\text{H}}_\text{e}(1,q_{1|1}))\;,
\end{equation}
where the two arguments in $P^{\text{H}}_\text{e}$ are the number of copies (i.e. the one remaining copy) and the prior probability. It is then easily verified that $P_{\text{e,LOCC}}$ equals the Helstrom bound in Eq.~\eqref{eq:HelstromapenF}.

\section{Quantities to investigate the error scaling in finite limit }
Finally, we wish to note that in the initial stages of this study we considered alternative methods of quantifying the attainability of the quantum Chernoff bound, which we present here for any interested reader. Essentially, we have investigated whether or not the quantum Chernoff bound is attainable given any finite number of copies of the unknown state. This amounts to asking whether the error probability provided by the Helstrom bound (when appropriately normalised for the number of copies of the unknown state being considered) ever reaches the error probability specified by the quantum Chernoff bound. However, the physical interpretation of these results is not so obvious as the Chernoff bound is exclusively an asymptotic quantity.

\begin{figure}[t]
\includegraphics[width=0.45\textwidth]{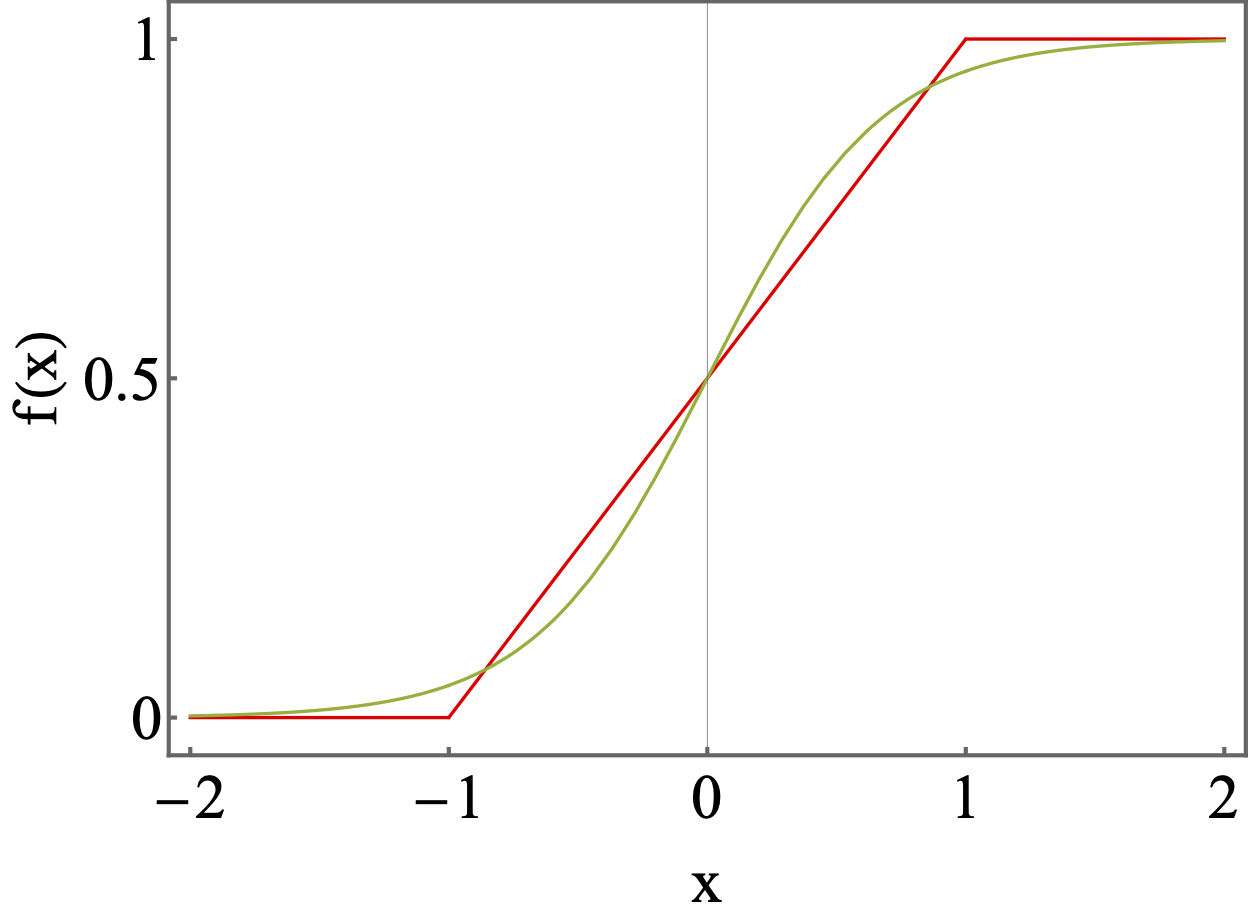}
\caption{\textbf{Difference between an attainable asymptotic limit and an unattainable asymptotic limit.} The function plotted in red, features a discontinuous gradient and as such the asymptotic value of this function can be attained for a finite argument. In green, we plot a sigmoid function, which has the same asymptotic limit as the red function, however, for no finite argument of the sigmoid function is the asymptotic limit attained.}
\label{fig:asympt}
\end{figure}

Our question is whether the error probability given by the Chernoff bound is attainable with any finite number of copies of the unknown state. To clarify what we mean by this, we have plotted two functions in Fig.~\ref{fig:asympt}. For one of these functions, the asymptotic limit of the function is attainable for finite arguments, for the other it is not. As the Chernoff bound specifically applies only in the asymptotic limit, it may be expected that the asymptotic limit cannot be reached for any finite argument, as is the case for many mathematical functions. Nevertheless, it is still worthwhile to investigate this, and related, questions. For instance, when investigating this question, we found interesting results about the rate at which the Chernoff bound is approached. 

In order to address this question we will need some method of normalising between the Helstrom bound (which applies in the finite limit) and the Chernoff bound (which applies in the asymptotic limit). Noting that
\begin{equation}
    \lim_{M\to\infty}\frac{-\text{log}(P_{\text{e,min},M})}{M}= \lim_{M\to\infty}-\text{log}(P_{\text{e,min},M}^{1/M})=-\text{log}(\kappa)\;,
\end{equation}
we can see that the appropriate normalisation to allow a comparison of error probabilities with different numbers of copies of the unknown state is to take the $M$-th root of the error probability. Furthermore, we have that
\begin{equation}
 \lim_{M\to\infty}P_{\text{e,min},M}^{1/M}=\kappa\;.
\end{equation}

In addition to comparing the error probabilities directly, it will prove instructive to compare error exponents.
For this purpose, let us define 
\begin{equation}
\begin{split}
        \epsilon_M&=-\frac{\text{log}(P_{\text{e}}^\text{H}(M))}{M}\qquad\text{and}\\
        \epsilon_\infty&=-\text{log}(\kappa)\;.
\end{split}
\end{equation}
Note that for distinguishing more than two states, the Chernoff bound can be off by a constant factor~\cite{audenaert2014upper}. However, for distinguishing two states this effect is not present. By comparing $\epsilon_M$ and $\epsilon_\infty$ we can examine how rapidly the ultimate limits on the error exponent can be approached.

When distinguishing quantum states we find situations where $\epsilon_1\geq\epsilon_2\geq\epsilon_3\hdots\geq\epsilon_\infty$, or equivalently $P_\text{e}(1)\leq P_\text{e}(2)^{1/2}\leq P_\text{e}(3)^{1/3}\hdots\leq \kappa$. At first glance this may appear surprising, however this is entirely in keeping with the results of Refs.~\cite{audenaert2007discriminating,Nussbaum2009chernoff}. Specifically Ref.~\cite{Nussbaum2009chernoff} proved
\begin{equation}
    \lim_{M\to\infty}\frac{\text{log}(P_{\text{e,min},M})}{M}\geq\text{log}(\kappa)\;,
\end{equation}
and Ref.~\cite{audenaert2007discriminating} derived
\begin{equation}
\frac{\text{log}(P_{\text{e,min},M})-\text{log}(q^s(1-q)^{1-s})}{M}\leq\text{log}(\kappa)\;,
\end{equation}
which implies
\begin{equation}
\frac{\text{log}(P_{\text{e,min},M})}{M}\leq\text{log}(\kappa)\;.
\end{equation}
Crucially, the first inequality holds only in the asymptotic limit, whereas the second holds for any value of $M$. Therefore, as expected we have $P_\text{e}(M)^{1/M}\leq \kappa$ and $\epsilon_M\geq\epsilon_\infty$. Note that it is of course possible to simply perform a worse single-copy measurement so that $P_\text{e}(1)= P_\text{e}(2)^{1/2}$. However, here we are only interested in optimal measurements given a finite number of copies. In this context, because of the additional $\text{log}(q^s(1-q)^{1-s})$ term, it is evident that we will never have equality between $P_\text{e}(M)^{1/M}$ and $\kappa$. In what follows we shall consider a quantity which accounts for this additional term, allowing up to investigate asymptotic properties of quantum state discrimination.

Let us define 
\begin{equation}
\label{eq:ratioPe}
R(M)=\frac{P^{\text{H}}_\text{e}(M)^{1/M}}{(q^{s^*}(1-q)^{(1-s^*)})^{1/M}\kappa}\;.
\end{equation}
where $s^*$ is the value of $s$ minimising $\kappa$ in Eq.~\eqref{eq:chernoffdefinition2}. We will also define

\begin{equation}
\hspace{-1cm}
    \delta(M,N)=R(M)-R(N)\;,
\end{equation}
where $M>N$.
These two quantities will let us examine how the error probability changes as a function of $M$ and whether the asymptotic error probability can ever be reached. We wish to investigate whether $R(M)$ can exactly equal $\lim_{M\to\infty}R(M)=1$ for any finite $M$. Essentially this amounts to showing that $R(M)$ never plateaus. If this is the case, then we will know that the error exponent specified by the Chernoff bound is not attainable using collective measurements on any finite number of copies of the unknown state. It is only attainable in the asymptotic limit. Alternatively, we could show that $\delta(M,N)$ never goes to 0 for specific problems.

\subsection{Finite size comparison of bounds in example 1}
Given Eqs.~\eqref{eqhelequal} and \eqref{eqChernoffsimple}, we are in a position to evaluate the quantities $R(M)$ and $\delta(M,N)$. In Fig.~\ref{fig:eg2_exponent} a), we plot $1-R(M)$ as a function of $M$. Although it is not possible to say definitively that $R(M)$ does not plateau to 1 for any finite $M$, it certainly appears that way from this figure. In Fig.~\mbox{\ref{fig:eg2_exponent} b)}, we plot $\delta(M+2,M)$ as a function of $M$. In this case it is evident that this function is continuously decreasing. However, again an analytic proof of this statement is not presented here.

Interestingly it appears that the purity of the unknown state strongly influences how rapidly the Chernoff bound can be approached. For more impure states closer to the maximally mixed state (and hence states which are harder to distinguish), the Chernoff bound is approached more rapidly than for more pure states. 

As this example is essentially classical (the two states are simultaneously diagonalisable), this appears to suggest that the inability to exactly saturate the Chernoff bound is not a quantum property.  Indeed similar concepts have been studied in classical information theory, where second order corrections to the asymptotic limits are known~\cite{hayashi2009information}.
\begin{figure*}[t]
\includegraphics[width=0.9\textwidth]{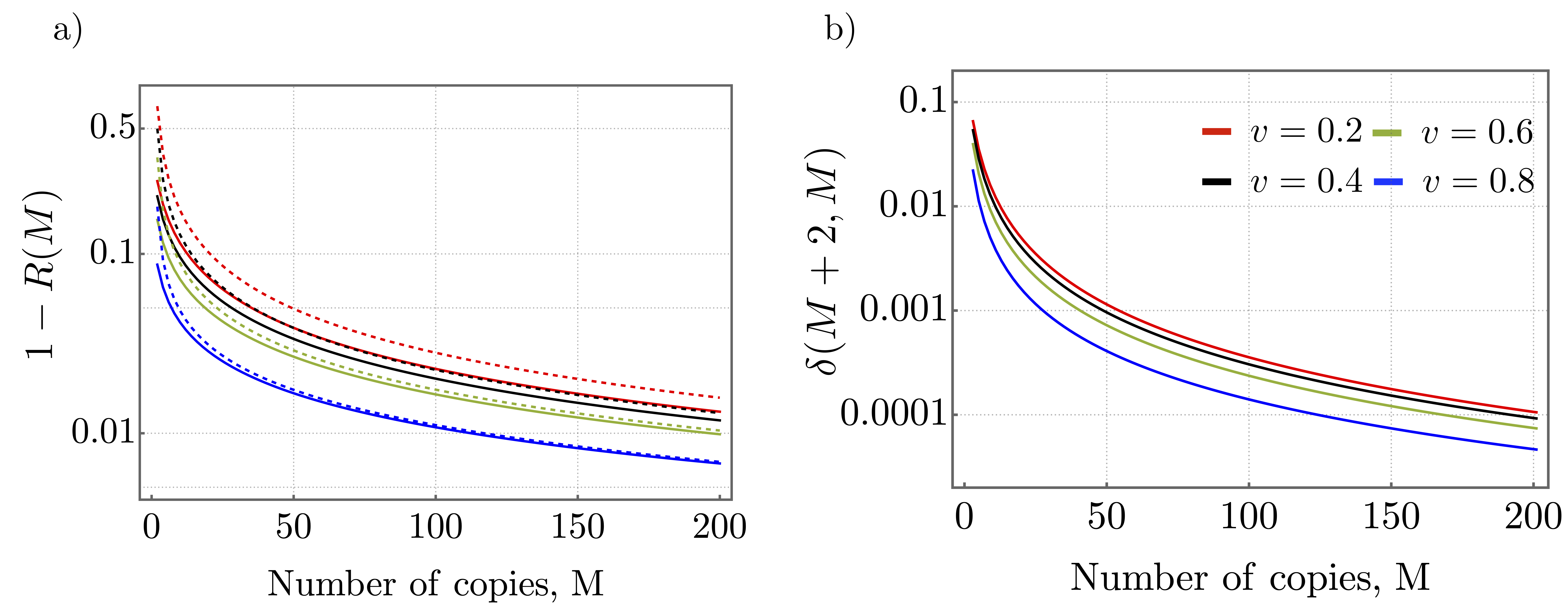}
\caption{\textbf{Scaling of error probabilities for a simple qubit example.} a) We plot 1 minus the ratio of the normalised $M$-copy error probability to the error probability attainable in the asymptotic limit (Eq.~\eqref{eq:ratioPe}). The solid lines correspond to even $M$ and the dashed lines to odd $M$. b) We plot the difference between the $M$ and $M+2$-copy ratios from a). The trend is shown only for even $M$, but the trend for odd $M$ is similar.}
\label{fig:eg2_exponent}
\end{figure*}

\subsection{Finite sample comparison of bounds in example 2}
Using Eqs.~\eqref{eqHelstromeg2} and ~\eqref{eqChernoffeg2}, we can evaluate the normalised ratio of the Helstrom bound to the Chernoff bound (Eq.~\eqref{eq:ratioPe}) as
\begin{equation}
\label{eq:ratiopure}
R(M)=\frac{\bigg(1-\sqrt{1-\abs{\bra{\psi_+}\ket{\psi_-}}^{2M}}\bigg)^{1/M}}{\text{cos}(\alpha)^{2}}\;.
\end{equation}
Below we prove that this ratio never reaches the value 1 for any finite $M$, hence the Chernoff bound cannot be exactly saturated. This is consistent with existing results for asymmetric quantum state discrimination~\cite{pereira2023analytical,audenaert2012quantum,rouze2017finite,li2014second}. Of course it is still possible that the Chernoff bound can be approximately saturated, as is shown in Fig.~\ref{fig:purequb}. In this figure, we plot 1 minus the normalised ratio of the $M$-copy error probability to the asymptotic error probability. This plot shows that the ratio is an increasing function of $M$, however the ratio never reaches exactly 1. Therefore, the asymptotic limit cannot be reached with any finite-copy measurement. This result is stated as the following theorem
\begin{theorem}
\label{th1}
For any two pure qubit states, the Chernoff bound cannot be saturated exactly for any finite $M$.
\end{theorem}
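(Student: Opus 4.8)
The plan is to reduce the statement $R(M)\neq 1$ to a single elementary inequality by inserting the closed-form expressions already derived. First I would record that for two pure qubit states the Chernoff quantity $\mathrm{Tr}[\rho_+^s\rho_-^{1-s}]$ equals $|\langle\psi_+|\psi_-\rangle|^2=\cos^2\alpha$ for every $s\in(0,1)$, since each $\rho_\pm$ is a rank-one projector; hence $\kappa=\cos^2\alpha$ and the optimal $s^*$ is immaterial. With equal priors the factor $(q^{s^*}(1-q)^{1-s^*})^{1/M}=(1/2)^{1/M}$ combines with $P_\mathrm{e}^{\mathrm H}(M)^{1/M}$ so that $R(M)=(2P_\mathrm{e}^{\mathrm H}(M))^{1/M}/\cos^2\alpha$, which is exactly the expression in Eq.~\eqref{eq:ratiopure}.

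Next I would substitute the pure-state Helstrom bound, writing $2P_\mathrm{e}^{\mathrm H}(M)=1-\sqrt{1-\cos^{2M}\alpha}$ and setting $x=\cos^{2M}\alpha$. Saturation $R(M)=1$ is then equivalent to $1-\sqrt{1-x}=x$, i.e.\ $\sqrt{1-x}=1-x$. Squaring, which is legitimate since both sides are nonnegative for $x\le 1$, gives $(1-x)=(1-x)^2$, i.e.\ $x(1-x)=0$, so equality can hold only at $x=0$ or $x=1$.

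I would then argue that for two distinct, non-orthogonal pure states one has $0<\cos^2\alpha<1$, so $x=\cos^{2M}\alpha\in(0,1)$ for every finite $M$, ruling out both endpoints and already establishing $R(M)\neq 1$. To obtain the stronger and physically meaningful statement that the bound is approached from below, I would observe that for $x\in(0,1)$ one has $\sqrt{1-x}>1-x$, because $\sqrt{t}>t$ on $(0,1)$ with $t=1-x$; hence $1-\sqrt{1-x}<x$ and therefore $R(M)<1$ strictly, while the limit $R(M)\to 1$ as $M\to\infty$ follows from $1-\sqrt{1-x}\sim x/2$ as $x\to 0$. Finally I would dispose of the degenerate cases $\alpha=0$ (identical states, discrimination impossible) and $\alpha=\pi/2$ (orthogonal states, $P_\mathrm{e}^{\mathrm H}=0$ and the bound trivially met), which are precisely the endpoints $x=1$ and $x=0$ excluded above.

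There is essentially no analytic obstacle here: once the explicit formulas are inserted, the content collapses to the one-line inequality $\sqrt{t}>t$ on $(0,1)$. The only points demanding care are bookkeeping rather than difficulty, namely confirming that $\kappa=\cos^2\alpha$ independently of the choice of $s^*$, keeping the normalisation factor consistent with Eq.~\eqref{eq:ratiopure}, and separating the strict interior case from the two degenerate boundary configurations so that the phrase ``for any two pure qubit states'' is correctly qualified.
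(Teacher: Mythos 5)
Your proposal is correct and follows essentially the same route as the paper: after inserting the pure-state Helstrom bound and $\kappa=\cos^2\alpha$ into $R(M)$, both arguments reduce the claim to the elementary inequality $\sqrt{1-\gamma}>1-\gamma$ for $\gamma=\cos^{2M}\alpha\in(0,1)$, giving $R(M)^M<1$. Your additional remarks on the degenerate endpoints $\alpha=0,\pi/2$ and the characterisation of equality are a welcome but minor refinement of the same argument.
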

\begin{proof}

Starting from Eq.~\eqref{eq:ratiopure} we wish to prove that the Chernoff bound cannot be saturated with collective measurements on any finite number of copies. Eq.~\eqref{eq:ratiopure} gives the normalised ratio of the $M$-copy Helstrom bound to the probability of error attained from the $M$-copy Chernoff bound. 
\begin{equation}
\label{eq:ratiopureApen}
R(M)=\frac{\bigg(1-\sqrt{1-\abs{\bra{\psi_+}\ket{\psi_-}}^{2M}}\bigg)^{1/M}}{\text{cos}(\alpha)^{2}}\;.
\end{equation}
We are interested in proving that $1-R(M)>0$ for all finite $M$. We shall prove $1^M>R(M)^M$ for all finite $M$, which is equivalent to $1>R(M)$.\begin{equation}
\begin{split}
R(M)^M=\frac{\bigg(1-\sqrt{1-\text{cos}(\alpha)^{2M}}\bigg)}{\text{cos}(\alpha)^{2M}}\;.
\end{split}
\end{equation}
Denote $\gamma=\text{cos}(\alpha)^{2M}$,
\begin{equation}
\begin{split}
R(M)^M=\frac{\bigg(1-\sqrt{1-\gamma}\bigg)}{\gamma}\;.
\end{split}
\end{equation}
For $0<\gamma<1$, $\sqrt{1-\gamma}>1-\gamma$.
\begin{equation}
\begin{split}
R(M)^M=\frac{\bigg(1-\sqrt{1-\gamma}\bigg)}{\gamma}\\
<\frac{\bigg(1-(1-\gamma)\bigg)}{\gamma}=1\;.
\end{split}
\end{equation}

\end{proof}

\begin{figure}[t]
\includegraphics[width=0.5\textwidth]{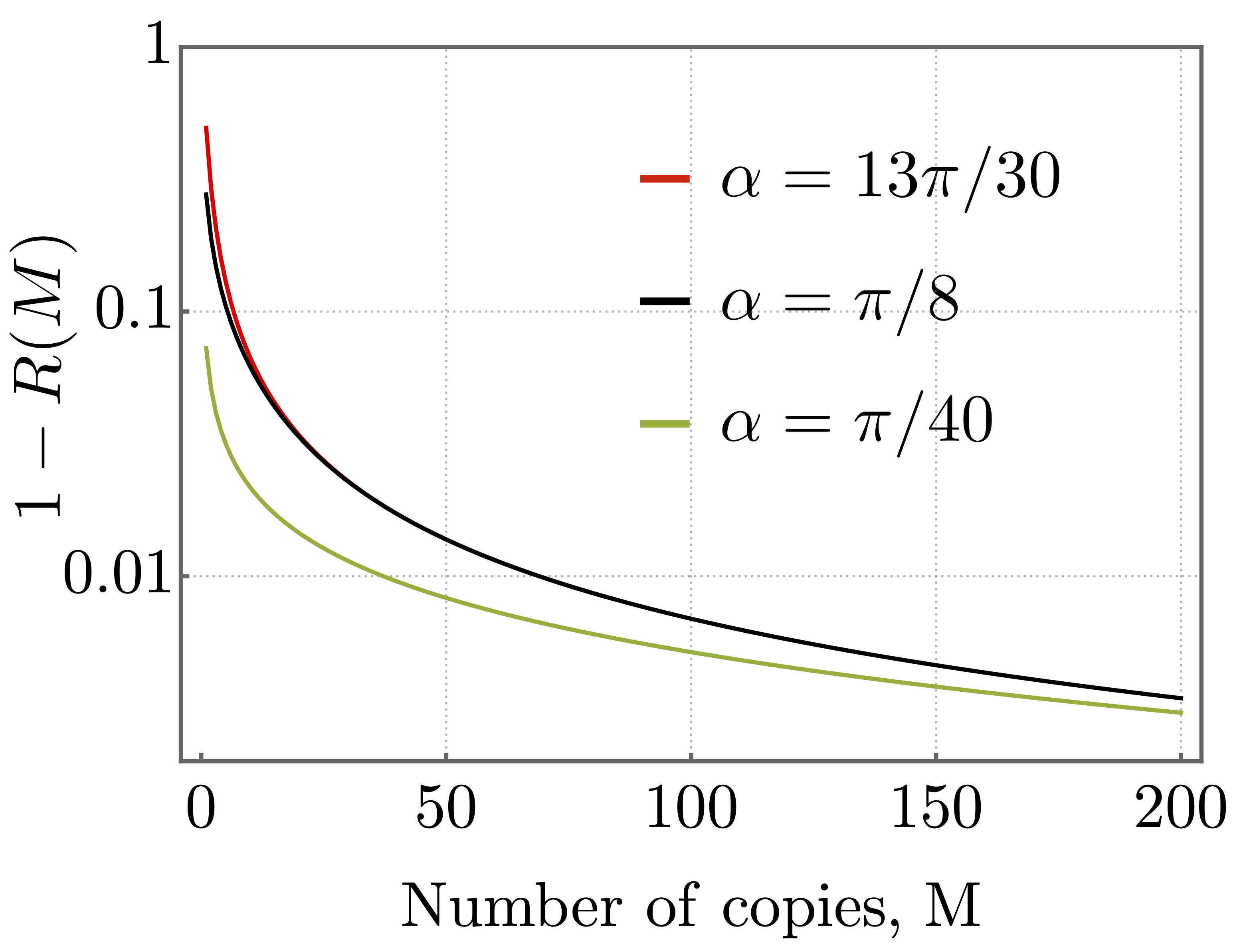}
\caption{\textbf{Scaling of the error probability ratio with increasing number of copies of pure qubit states.} 1 minus the normalised ratio of the error probability given by the Helstrom bound to that of the quantum Chernoff bound. This quantity approaches 0, however, as we prove in the main text, does not reach 0 for any finite value of $M$. }
\label{fig:purequb}
\end{figure}

\subsubsection{Numerical simulations for mixed states}
For mixed states we see similar trends, shown in Fig.~\ref{fig:mixedqub}. In the mixed case it is harder to obtain analytic results. However, as shown in Fig.~\ref{fig:mixedqub}, the numerical results demonstrate a similar trend to previously, namely that the Chernoff bound is approached more rapidly for less pure states. These simulations also show that the normalised ratio of error probabilities is an increasing function of $M$, however whether this ratio can exactly equal 1 for any finite $M$ remains an open question. 

\begin{figure}[t]
\includegraphics[width=0.48\textwidth]{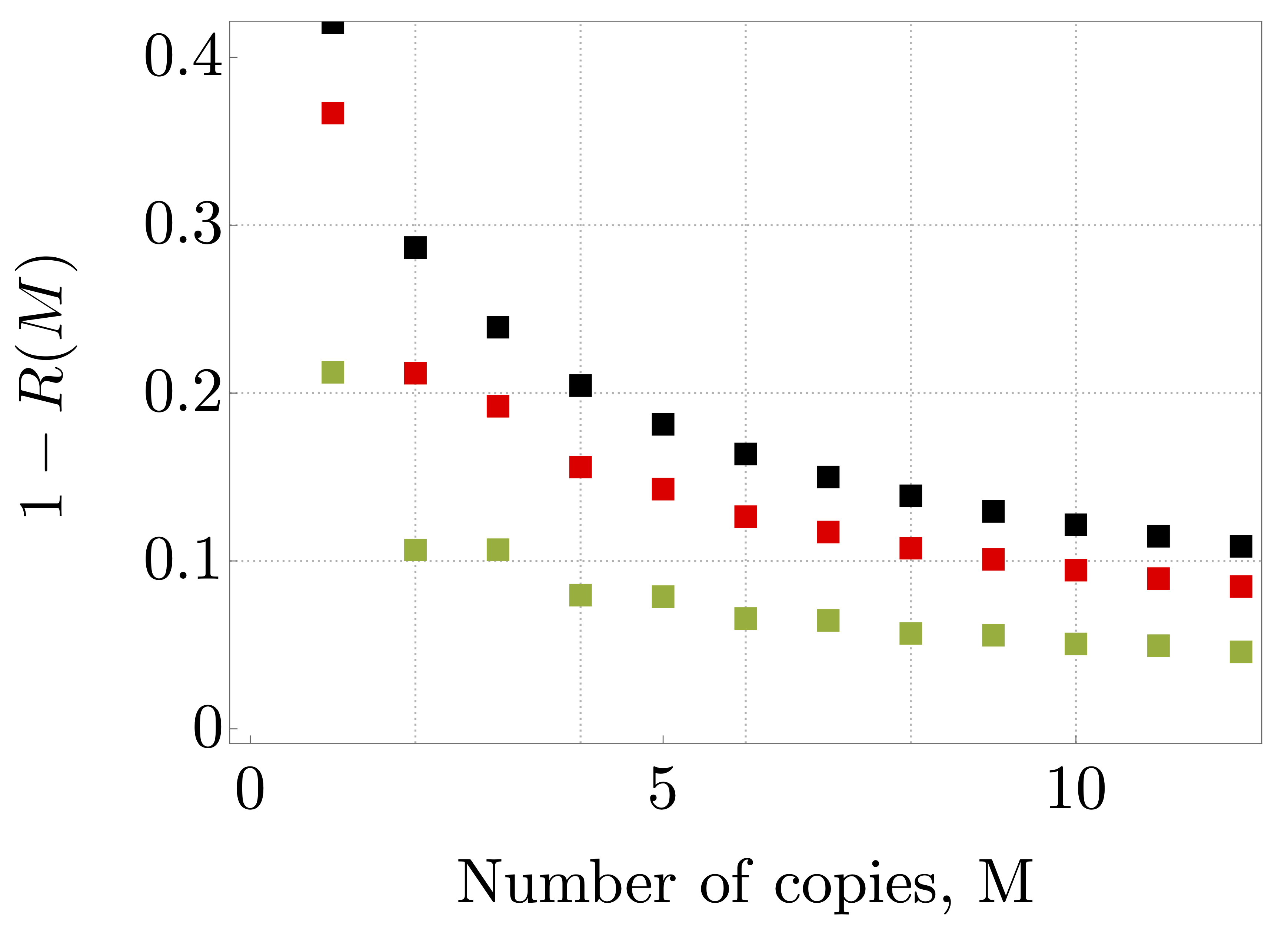}
\caption{\textbf{Comparing the error probability for mixed states.} We plot 1 minus the normalised ratio of the $M$-copy Helstrom bound error probability and the asymptotic error probability. Black, red and green markers correspond to $v=0.05$, $v=0.25$ and $v=0.6$ respectively. All points use $\alpha=\pi/5$.}
\label{fig:mixedqub}
\end{figure}

\bibliography{state_disc_bib}
\bibliographystyle{naturemag}

\end{document}